\documentclass[letterpaper, 10 pt, conference]{ieeeconf}  
\IEEEoverridecommandlockouts                              
\usepackage{graphicx}
\usepackage{amsmath}
\usepackage{amsfonts}
\usepackage[acronym]{glossaries}
\usepackage{xcolor}
\usepackage{xurl}
\usepackage{cite}
\usepackage{multirow}
\usepackage{float}
\usepackage{algorithm,algorithmic}
\let\labelindent\relax
\usepackage{enumitem}

\newtheorem{thm}{Theorem}

\newtheorem{prop}{Proposition}

\newtheorem{lemma}{Lemma}
\newtheorem{rmk}{Remark}
\newtheorem{cor}{Corollary}

\title{\LARGE \bf
Soft projections for robust data-driven control
}

\author{Andr\'as Sasfi, Jaap Eising, Florian D\"orfler%
\thanks{A. Sasfi and F. D\"orfler are with the Department of Information Technology and Electrical Engineering, ETH Z\"urich, Switzerland, {\tt\footnotesize \{asasfi, doerfler\}@control.ee.ethz.ch}. J.~Eising is with ENTEG, University of Groningen, the Netherlands, {\tt\footnotesize j.eising@rug.nl}. This work was supported by the SNF/FW Weave Project 200021E\_20397}}

\newacronym{LTI}{LTI}{linear time-invariant}
\newacronym{DeePC}{DeePC}{data-enabled predictive control}

\begin{document}

\maketitle
\thispagestyle{empty}
\pagestyle{empty}

\begin{abstract}
We consider data-based predictive control based on behavioral systems theory.
In the linear setting this means that a system is described as a subspace of trajectories, and predictive control can be formulated using a projection onto the intersection of this behavior and a constraint set.
Instead of learning the model, or subspace, we focus on determining this projection from data.
Motivated by the use of regularization in \gls{DeePC}, we introduce the use of soft projections, which approximate the true projector onto the behavior from noisy data.
In the simplest case, these are equivalent to known regularized \gls{DeePC} schemes, but they exhibit a number of benefits.
First, we provide a bound on the approximation error consisting of a bias and a variance term that can be traded-off by the regularization weight.
The derived bound is independent of the true system order, highlighting the benefit of soft projections compared to low-dimensional subspace estimates.
Moreover, soft projections allow for intuitive generalizations, one of which we show has superior performance on a case study.
Finally, we provide update formulas for soft projectors enabling the efficient adaptation of the proposed data-driven control methods in the case of streaming data.
\end{abstract}

\section{Introduction}
Direct data-driven predictive control methods~\cite{coulson2019data,coulson2021distributionally,berberich2020data,breschi2023data,verhoek2021data,vanwaarde2020informativity,depersis2019formulas} have recently gained significant attention in the control community. 
These formulations are fundamentally rooted in behavioral systems theory, which takes a representation-free perspective and treats dynamical systems simply as sets of trajectories~\cite{willems1997introduction}. 
For \gls{LTI} systems, this means that the system behavior can be described as a linear subspace containing all possible trajectories.
This viewpoint enables one to use raw data directly to represent the underlying system~\cite{willems2005note}.
This approach is often referred to as \textit{direct} to emphasize the contrast with traditional \textit{indirect} approaches, which require the intermediate step of first identifying a parametric model before performing controller design based on that estimate.
To ensure that these direct control formulations remain robust to measurement noise and nonlinearities, regularization is typically added to the optimization problem~\cite{dorfler2022bridging,markovsky2022data}.
As demonstrated by numerous empirical case studies, regularized data-driven control methods often perform remarkably well in practical application domains~\cite{dorfler2022bridging,markovsky2021behavioral}.

Initially, regularization was introduced into direct data-driven control formulations heuristically, lacking clear theoretical interpretability~\cite{coulson2019data}.
Subsequent research has largely focused on addressing this gap to develop more interpretable formulations~\cite{kladtke2023implicit,mattsson2024equivalence,dorfler2022bridging,breschi2023data,huang2023robust,sasfi2025gaussian}. 
While some of these studies successfully explain the implicit effects of regularization within existing frameworks~\cite{kladtke2023implicit,mattsson2024equivalence}, they do not propose novel control algorithms.
Conversely, other approaches have modified the seminal formulation~\cite{coulson2019data} to derive new, theoretically grounded regularization terms~\cite{breschi2023data,dorfler2022bridging}. 
Beyond interpretability, a second major challenge is the efficient integration of online data updates into these control schemes, as highlighted in the surveys~\cite{berberich2024overview,markovsky2021behavioral}.
Although recent methods~\cite{sasfi2025great,jin2025online} allow for the online adaptation of behavioral subspace models, they rely on explicitly estimating the underlying subspace. 
This explicit approach has notable drawbacks: representing the behavior as a low-dimensional subspace generally yields inferior performance compared to regularized schemes~\cite{markovsky2022ddinterpolation}, and the accuracy of these estimates is highly sensitive to system order selection.

To address current challenges, we take a ground-up perspective rather than adapting the seminal formulation. 
Based on behavioral systems theory, we view predictive control for linear systems as a projection onto the behavior.
Inspired by the principle behind the direct approach to data-driven control, we learn the solution map directly instead of explicitly learning the system model or its subspace representation.
More specifically, 
we introduce the use of \textit{soft projections}~\cite{wax2021detection,wax2022vector,wax2023robust} to approximate the true projector from noisy data, which is conceptually similarly to regularization in existing methods.
Crucially, we provide rigorous bounds on the approximation error of these soft projections.
The derived error bound reveals a clear trade-off between bias and variance, which can be tuned via a parameter similar to a regularization weight.
Furthermore, this bound is completely independent of the true system order, thereby avoiding the sensitivity issues that come with explicit subspace estimates.
The proposed perspective leads to conceptually novel control formulations, offering a new alternative to approaches that focus on designing interpretable regularization terms.
Specifically, we propose two novel formulations: the first is a generalization of regularized \acrfull{DeePC}~\cite{coulson2019data}, while the second is a more interpretable formulation that exhibits superior robust performance in a case study.
Finally, we provide efficient rank-one update formulas for soft projectors, allowing the control methods to seamlessly adapt new data online.

The remainder of the paper is organized as follows.
Section~\ref{sec:prelim} contains preliminaries on behavioral systems theory, orthogonal projections, and direct data-driven control. 
Soft projectors are analyzed in detail in Section~\ref{sec:soft_proj}.
We provide novel data-driven control formulations in Section~\ref{sec:novel_methods}, and describe the online updates of soft projectors in Section~\ref{sec:online}. Finally, Section~\ref{sec:case_study} contains a case study.

\section{Preliminaries} \label{sec:prelim}
\subsection{Behavioral systems theory} \label{sec:prelim_control}
Behavioral systems theory defines a system as a set of trajectories, called the \textit{behavior}.
In this work, we focus on finite length system trajectories constructed as $w_{t,t+L-1}=\begin{bmatrix}
    u_t^\top & u_{t+1}^\top & \dots & u_{t+L-1}^\top & y_t^\top & y_{t+1}^\top & \dots& y_{t+L-1}^\top
\end{bmatrix}^\top \in \mathbb{R}^{qL}$,
where $u_t\in\mathbb{R}^m$ and $y_t\in\mathbb{R}^{(q-m)}$ are the inputs and outputs of the system, respectively.
For \gls{LTI} systems, the \textit{restricted behavior}, $\mathcal{B}_L$ is defined as
\begin{align*}
    \mathcal{B}_L = \{w\in \mathbb{R}^{qL}~|~w \text{ is a length-$L$ trajectory of the system}\},
\end{align*}
and it is a shift-invariant subspace of the set of all possible trajectories~\cite{markovsky2022data}.
Let $n$ be the order of a minimal system realization.
Then, for \gls{LTI} systems and for large enough $L$, the restricted behavior $\mathcal{B}_L \subseteq \mathbb{R}^{qL}$ is a subspace of dimension $d = mL + n$~\cite{willems1997introduction}.
A basis for $\mathcal{B}_L$ can be constructed from a state-space representation or directly from sufficiently rich and noise free data for \gls{LTI} behaviors~\cite{markovsky2022data, willems2005note}.
The restricted behavior fully specifies the behavior in case $L$ is larger than the lag of the system~\cite{willems1997introduction}.
We focus on the restricted behavior $\mathcal{B}_L$ and refer to it simply as the behavior in the remainder of the manuscript.

\subsection{Orthogonal projections}
Let $U\in\mathbb{R}^{qL\times d}$ be a matrix with full column rank. The orthogonal projector to the column space $\mathrm{col}(U)$ is
        $$
        P_U = U(U^\top U )^{-1}U^\top.$$
The projector to the orthogonal complement of $\mathrm{col}(U)$ is $I-P_U$.
Furthermore, $P_U = P_U^\top$ and $P_U^2 = P_U$ hold.
For two subspaces $\mathrm{col}(U_1)$ and $\mathrm{col}(U_2)$, the gap metric~\cite{ye2016schubert} is defined as
$$
d_\infty(\mathrm{col}(U_1),\mathrm{col}(U_2)) := \|P_{U_1} - P_{U_2}\|_2,
$$
and it measures the largest principal angle between the subspaces.
With other words, the gap metric quantifies the worst case error between the projection of a unit vector onto $\mathrm{col}(U_1)$ and $\mathrm{col}(U_2)$.
If the dimensions of $\mathrm{col}(U_1)$ and $\mathrm{col}(U_1)$ are different, then $\|P_{U_1} - P_{U_2}\|_2 =1$.
Note that the projector does not depend on the representation of the subspace, i.e., if $\mathrm{col}(U_1) = \mathrm{col}(U_2)$, then $P_{U_1} = P_{U_2}$.

\subsection{Behavioral predictive control}
We can use the behavioral framework to formulate a predictive control scheme as follows.
At time $t$, consider $T_\mathrm{ini}$ greater than the lag and define $w_\mathrm{ini} = w_{t-T_\mathrm{ini}-1,t-1}$.
We are interested in finding a finite input sequence, such that the future trajectory $w_\mathrm{f} = w_{t,t+T_\mathrm{f}-1}$ minimizes the control cost
\begin{align*}
    J(w_\mathrm{f}) = \sum_{i=t}^{t+T_\mathrm{f}-1} \|y_i - y_\mathrm{ref}\|_Q^2 + \|u_i - u_\mathrm{ref}\|_R^2,
\end{align*}
where $Q$ and $R$ are symmetric positive definite matrices, and $\|\cdot\|_Q$ denotes the 2-norm weighted by $Q$.
The overall trajectory $w = \begin{bmatrix}
    w_\mathrm{ini}^\top & w_\mathrm{f}^\top
\end{bmatrix}^\top\in\mathbb{R}^{qL}$ with $L = T_\mathrm{ini} + T_\mathrm{f}$ has to be consistent with the system behavior, which we write as $w\in\mathcal{B}_L$, with a slight abuse of notation.
In practice, measurements are often corrupted by noise, and therefore, 
only a noisy estimate $\hat{w}_\mathrm{ini}$ of $w_\mathrm{ini}$ is available.
To mitigate this, one can add $w_\mathrm{ini}$ to the problem as an optimization variable that has to be close (but not necessarily equal) to the measurement $\hat{w}_\mathrm{ini}$.
Moreover, the trajectory must be in some convex set of constraints $\mathcal{C}$.
Formalizing these requirements leads to the predictive control problem~\cite{dorfler2022bridging,markovsky2022data}
\begin{align} \label{eq:behavioral_control}
\begin{split}
    \min_{w\in \mathcal{B}_L \cap \mathcal{C}} & \quad  J(w_\mathrm{f}) + \lambda_\sigma \|\sigma\|_2^2, \\
    \mathrm{s.t.} & \quad  w_\mathrm{ini} = \hat{w}_\mathrm{ini} + \sigma.
    \end{split}
\end{align}
The controller is applied in receding horizon, i.e., once the problem is solved, we apply the first input $u_{t}$, and restart the process.

To highlight the connection between problem~\eqref{eq:behavioral_control} and projections, we substitute the constraint in the cost yielding
    \begin{align} \label{eq:behavioral_control_proj}
    \min_{w\in \mathcal{B}_L \cap \mathcal{C}} \left \|w - \begin{bmatrix}
    \hat{w}_\mathrm{ini} \\ w_\mathrm{ref}
    \end{bmatrix}\right \|^2_{W},
    \end{align}
where $w_\mathrm{ref}$ is the reference for $w_\mathrm{f}$ constructed from $u_\mathrm{ref}$ and $y_\mathrm{ref}$, and $W$ is a block diagonal symmetric positive definite weighting matrix with blocks built from $\lambda_\sigma I$, $Q$, and $R$.
Problem~\eqref{eq:behavioral_control_proj} is equivalent to the (weighted) projection of $\begin{bmatrix}\hat{w}_\mathrm{ini}^\top & w_\mathrm{ref}^\top\end{bmatrix}^\top$ onto the set $\mathcal{B}_L\cap\mathcal{C}$.
In case there are no constraints, i.e., $\mathcal{C} = \mathbb{R}^{qL}$, the solution can be simply expressed as 
$$w^\star_\mathcal{B} = P_B^W W\begin{bmatrix}\hat{w}_\mathrm{ini}^\top & w_\mathrm{ref}^\top\end{bmatrix}^\top,$$ where $B$ is a basis for the subspace $\mathcal{B}_L$, and
$P_B^W := B(B^\top W B)^{-1} B^\top$ is the projector onto $\mathcal{B}_L$ weighted by $W$.

\subsection{Data-driven predictive control} \label{sec:DeePC_sp}
The control formulation~\eqref{eq:behavioral_control} requires the exact knowledge of the system behavior $\mathcal{B}_L$, which is often not available.
In the remainder of the paper, we assume that the behavior is unknown, but $D \gg qL$ noisy trajectories $w^\mathrm{d}$ from $\mathcal{B}_L$ are available.
We can arrange the measurements in a data matrix as
\begin{align} \label{eq:noisy_data}
     H = \begin{bmatrix}
         w^\mathrm{d}_{1,L} & w^\mathrm{d}_{2,L+1} & \dots & w^\mathrm{d}_{D,L+D-1}
     \end{bmatrix}.
\end{align}
The data matrix $H$ is a concatenation of two Hankel matrices, one for the inputs, and one for the outputs.
With a slight abuse of notation, we permute the rows of $H$ in the subsequent control problems, so that the first $qT_\mathrm{ini}$ rows correspond to $w_\mathrm{ini}$, and the rest corresponds to $w_\mathrm{f}$.

Given the noisy data $H$, the data-driven control problem, termed \gls{DeePC}~\cite{coulson2019data}, is formulated as
\begin{align} \label{eq:DeePC}
    \begin{split}
    \min_{w \in \mathcal{C},g\in\mathbb{R}^D} & \quad J(w_\mathrm{f}) + \lambda_\sigma \|\sigma\|_2^2 + \lambda_g\|g\|_2^2 \\
    \mathrm{s.t.} & \quad Hg = \begin{bmatrix}
        \hat{w}_\mathrm{ini} + \sigma \\ 
        w_\mathrm{f}
    \end{bmatrix}.
    \end{split}
\end{align}
If the measurements are noise free and the data is persistently exciting, then $\mathrm{col}(H) = \mathcal{B}_L$ by the fundamental lemma~\cite{willems2005note}.
Thus, the constraint $Hg=w$ enforces $w\in\mathcal{B}_L$.
However, $H$ becomes full rank in general when the data is noisy, rendering the constraint $Hg=w$ meaningless.
To address this issue, the regularization term $\|g\|_2^2$ with weighting $\lambda_g$ is often added to the cost.
Another commonly used regularizer, called the projected 2-norm regularizer~\cite{markovsky2022data,dorfler2022bridging}, is defined as $\|(I-\Pi)g\|_2^2$ with $\Pi:=
([Z_\mathrm{p}^\top~ U_\mathrm{f}^\top]^\top)^\dagger[Z_\mathrm{p}^\top~ U_\mathrm{f}^\top]^\top$, where $Z_\mathrm{p}$ and $U_\mathrm{f}$ denote the rows of $H$ corresponding to $w_\mathrm{ini}$ and the inputs in $w_{f}$, respectively.

In the absence of constraints, the solution to~\eqref{eq:DeePC} can be expressed in closed form as a ``softened" projection onto $\mathrm{col}(H)$, as shown below.
\begin{lemma} \label{lemma:DeePC}
    If $\mathcal{C} = \mathbb{R}^{qL}$ and $\lambda_g>0$, the solution to~\eqref{eq:DeePC} is
    \begin{align*}
        w^\star_\mathrm{DeePC} & =  H(H^\top W H + \lambda_g I)^{-1} H^\top W \begin{bmatrix}
    \hat{w}_\mathrm{ini} \\ w_\mathrm{ref}
\end{bmatrix}, \\
    g^\star & = (H^\top W H + \lambda_g I)^{-1}H^\top W \begin{bmatrix}
            \hat{w}_\mathrm{ini} \\ w_\mathrm{ref}
        \end{bmatrix}.
    \end{align*}
\end{lemma}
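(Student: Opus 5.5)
The plan is to eliminate the equality constraint, which turns \eqref{eq:DeePC} into an unconstrained strictly convex quadratic program in $g$ alone, and then read off the minimizer from its normal equations. With $\mathcal{C}=\mathbb{R}^{qL}$, the constraint $Hg = \begin{bmatrix}\hat{w}_\mathrm{ini}+\sigma \\ w_\mathrm{f}\end{bmatrix}$ determines both $\sigma$ and $w_\mathrm{f}$ uniquely from $g$. Writing $r := \begin{bmatrix}\hat{w}_\mathrm{ini}^\top & w_\mathrm{ref}^\top\end{bmatrix}^\top$ and proceeding exactly as in the passage from \eqref{eq:behavioral_control} to \eqref{eq:behavioral_control_proj}, the terms $J(w_\mathrm{f}) + \lambda_\sigma\|\sigma\|_2^2$ are equal to $\|Hg - r\|_W^2$ with the same block diagonal $W$ built from $\lambda_\sigma I$, $Q$ and $R$. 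The problem therefore reduces to
\begin{align*}
\min_{g\in\mathbb{R}^D} \; (Hg-r)^\top W (Hg-r) + \lambda_g g^\top g .
\end{align*}

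Next, I would argue that this objective is strictly convex. Its Hessian is $2(H^\top W H + \lambda_g I)$. Since $W \succ 0$, the term $H^\top W H$ is positive semidefinite, and because $\lambda_g>0$ the whole matrix is positive definite, hence invertible. Consequently, the unique minimizer is obtained by setting the gradient to zero:
\begin{align*}
2H^\top W(Hg - r) + 2\lambda_g g = 0 .
\end{align*}
This gives $g^\star = (H^\top W H + \lambda_g I)^{-1}H^\top W r$. Substituting back through $w^\star = Hg^\star$ yields the stated expression for $w^\star_\mathrm{DeePC}$, and $\sigma^\star$ and $w_\mathrm{f}^\star$ are read off as the corresponding blocks of $w^\star$.

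The only step that needs care is the bookkeeping in the first reduction. One must check that $\lambda_\sigma\|\sigma\|^2 = \lambda_\sigma\|w_\mathrm{ini}-\hat w_\mathrm{ini}\|^2$. One must also check that $J(w_\mathrm{f})$ is exactly $\|w_\mathrm{f}-w_\mathrm{ref}\|^2$ weighted by $\mathrm{blkdiag}(I\otimes R, I\otimes Q)$ in the ordering used for $H$. The row permutation of $H$ has to be applied consistently to $r$ and $W$ as well. Once this identification is verified, the rest is routine regularized least squares.
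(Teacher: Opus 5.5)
Your proposal is correct and takes the same route as the paper. Both arguments substitute the constraint to reach the weighted regularized least-squares problem $\min_{g}\|Hg-r\|_W^2+\lambda_g\|g\|_2^2$. The paper then simply cites the standard solution of that problem, while you write out the normal equations and the positive definiteness of $H^\top W H+\lambda_g I$ explicitly, which is exactly what that citation contains.
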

\begin{proof}
Substituting the constraint in the cost in~\eqref{eq:DeePC} gives
$$
\min_{g\in\mathbb{R}^{D}}\left\|Hg - \begin{bmatrix}
    \hat{w}_\mathrm{ini} \\ w_\mathrm{ref}
\end{bmatrix}\right\|_W^2 + \lambda_g\|g\|_2^2.
$$
The statement follows using the fact that this is a regularized least-squares problem~\cite[Section 15.1]{boyd2018introduction}.    
\end{proof}

Lemma~\ref{lemma:DeePC} reveals the connection between problem~\eqref{eq:behavioral_control} and~\eqref{eq:DeePC}.
Namely, the true projector $P_B^W$ in~\eqref{eq:behavioral_control_proj} is approximated by $H(H^\top W H + \lambda_g I)^{-1} H^\top$ in \gls{DeePC}, which can be interpreted as a weighted soft projector onto the column space of $H$.
Since noise in the data makes $H$ full rank in general, the true projection to $\mathrm{col}(H)$ becomes identity.
Therefore, the projector in the data-driven setting is ``softened" by the regularizer $\lambda_g$.

Even though \gls{DeePC} has been observed to achieve great performance in various case studies~\cite{dorfler2022bridging,markovsky2021behavioral}, the formulation has the following limitations.
1) Regularization on the $g$ vector is difficult to interpret, as it is not connected to the  behavior directly.
The interpretation through weighted soft projections above only holds in the absence of constraints, as \gls{DeePC} attains an explicit solution in this case. 
2) The (data-driven or non-parametric) system model only appears implicitly through the regularizer in~\eqref{eq:DeePC}. 
As a consequence, it is not straightforward how new data can be incorporated in the formulation for online adaptation.
In the remainder of the paper, we show how we can overcome these limitations by formulating the behavioral control problem with soft projections.

Motivated by these observations, we further analyze the soft projector and show that it is a good approximation of the true projection onto the behavior in the next section.
We also show that formulating data-driven predictive control problems using soft projectors explicitly resolves the above mentioned limitations of \gls{DeePC}, and gives rise to novel control schemes.

\section{Soft projections} \label{sec:soft_proj}
Let us analyze the soft projector with $\delta>0$, defined as
\begin{align} \label{eq:soft_proj}
    \tilde{P}_H^W := H(H^\top W H + \delta I)^{-1} H^\top.
\end{align}
The unweighted soft projector with $W=I$ was introduced in~\cite{wax2021detection,wax2022vector,wax2023robust}, and it is closely related to \textit{diagonal loading}, which is a widely used technique in the signal processing literature~\cite{van2002optimum}.
We denote the unweighted soft projector by $\tilde{P}_H := \tilde{P}_H^I$ throughout the paper.
We derive the results for $\tilde{P}_H^W$ for completeness, but often focus on $\tilde{P}_H$ to build intuition.

Computing $\tilde{P}_H^W$ as in~\eqref{eq:soft_proj} involves inverting a $D$-by-$D$ matrix, which can be computationally restrictive. 
Using the matrix inversion lemma, $\tilde{P}_H^W$ can also be expressed as (cf.~\cite{wax2023robust})
\begin{align} \label{eq:soft_proj_covariance}
    \tilde{P}_H^W = W^{-1} - \left(\frac{1}{\delta}WHH^\top W + W\right)^{-1}.
\end{align}
This formulation only requires inverting an $qL$-by-$qL$ matrix, and thus, the computation does not scale with the data size.

As the constant $\delta\to 0$, the soft projector $\tilde{P}_H$ reduces to the orthogonal projector onto $\mathrm{col}(H)$, i.e., $\tilde{P}_H \to HH^\dagger = P_H$, where $\dagger$ denotes the Moore-Penrose inverse.
In the noise-free case ($E=0$) we have that $\mathrm{col}(H) = \mathrm{col}(B)$, and thus, the soft projector recovers the true projection $P_B$ as $\delta\to0$.
Recall that the eigenvalues of an orthogonal projector are either 1 or 0.
For $\delta>0$, however, the soft projector $\tilde{P}_H$ scales the singular values of $H$, denoted by $\sigma_i$.
The eigenvalues of $\tilde{P}_H$ are $0\leq\dfrac{\sigma_i^2}{\sigma_i^2 + \delta}<1$, while the eigenvectors are the same as those of $HH^\top$.
As $\delta$ increases, the larger $\sigma_i$ start to dominate in $\tilde{P}_H$, and the smaller $\sigma_i$ are suppressed.
Thus, the soft projector ``denoises" the data matrix $H$.
Since all eigenvalues of $\tilde{P}_H$ are less than one, the soft projections are biased towards zero compared to the true projection $P_B$.

\begin{rmk} \label{rmk:DeePC_explicit_sol}
Introducing the soft projector $\tilde{P}_H$ was motivated by the solution to the \gls{DeePC} problem with 2-norm regularization in Section~\ref{sec:DeePC_sp}.
In fact, the solution to \gls{DeePC} in Lemma~\ref{lemma:DeePC} can be expressed as $w^\star_\mathrm{DeePC} = \tilde{P}^W_H W\begin{bmatrix}
    \hat{w}_\mathrm{ini}^\top & w_\mathrm{ref}^\top
\end{bmatrix}^\top$ with $\delta = \lambda_g$.
In case the projected 2-norm regularizer is used in~\eqref{eq:DeePC}, the soft projector becomes
$$
\hat{P}_H = H\left(H^\top H + \delta(I-P_{[Z_p^\top~U_f^\top]})\right)^{-1} H^\top,
$$
where the inverse exists, since {$\mathrm{col}([Z_p^\top~U_f^\top]^\top) \subseteq \mathrm{col}(H^\top)$}.
The projector $\hat{P}_H$ does not shrink vectors in the subspace $\mathrm{col}([Z_p^\top~U_f^\top])$, only in its orthogonal complement.
In the following, we focus on the soft projector $\tilde{P}_H$, but similar arguments hold for $\hat{P}_H$. \hfill \QED
\end{rmk}

\subsection{Approximation error bounds}
We now analyze the estimation error of the weighted soft projector.
Motivated by the gap metric, we quantify the approximation error of the weighted soft projection by providing an upper bound on $\|\tilde{P}_H^W - P_B^W\|_2$ below. For the analysis, let us decompose the data matrix as $H = BS + E$, where $B\in\mathbb{R}^{qL\times d}$, $BB^\top = I$ is an orthonormal basis for the subspace $\mathcal{B}_L$. 
We interpret the term $BS$ as the nominal part of the data that is in the behavior, and $E$ is noise.
\begin{thm} \label{thm:bound}
Assume that $\delta>0$ and $W,S$ have full rank. Then, $\|\tilde{P}_H^W - P_B^W\|_2 \leq \gamma(\delta)$, with
    \begin{align}
    \begin{split}
        \gamma(\delta):=&\kappa(W)\frac{2\|S\|_2\|E\|_2 + \|E\|_2^2}{\sigma_\mathrm{min}^2(W^{1/2}H) + \delta} \\
        &\quad + \frac{\delta \|W^{-1}\|_2}{\sigma_\mathrm{min}(B^\top W BSS^\top) + \delta},
    \end{split}
    \end{align}
    where $\sigma_\mathrm{min}(\cdot)$ ($\sigma_\mathrm{max}(\cdot)$) denotes the smallest (largest) singular value of a matrix, and $\kappa(W) = \sigma_\mathrm{max}(W)/\sigma_\mathrm{min}(W)$ is the condition number of $W$.
\end{thm}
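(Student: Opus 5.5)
The plan is a triangle inequality through the noise-free soft projector $\tilde P_{BS}^W := BS(S^\top B^\top W B S+\delta I)^{-1}S^\top B^\top$:
\[
\|\tilde P_H^W - P_B^W\|_2 \le \|\tilde P_H^W - \tilde P_{BS}^W\|_2 + \|\tilde P_{BS}^W - P_B^W\|_2 .
\]
The first term will give the variance part of $\gamma(\delta)$ and the second the bias part. Throughout, $B$ is read as having orthonormal columns, $B^\top B = I$, so that $\|B\|_2 = 1$. Then $K := B^\top W B$ is symmetric positive definite and $\|K^{-1}\|_2 \le \|W^{-1}\|_2$.

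\emph{Variance term.} I will use the covariance form \eqref{eq:soft_proj_covariance}. With $G_X := W^{1/2}XX^\top W^{1/2}$ this form reads
\[
\tilde P_X^W = W^{-1} - W^{-1/2}(I + G_X/\delta)^{-1}W^{-1/2} .
\]
Hence the difference equals
\[
\tilde P_H^W - \tilde P_{BS}^W = W^{-1/2}\big[(I+G_{BS}/\delta)^{-1} - (I+G_H/\delta)^{-1}\big]W^{-1/2}.
\]
Apply $A^{-1}-C^{-1} = A^{-1}(C-A)C^{-1}$ to the bracket. This turns it into
\[
(I+G_{BS}/\delta)^{-1}\,\tfrac{1}{\delta}\, W^{1/2}(HH^\top - BSS^\top B^\top)W^{1/2}\,(I+G_H/\delta)^{-1}.
\]
The four factors are bounded as follows.
\begin{itemize}
\item The first factor is symmetric with eigenvalues in $(0,1]$, so its norm is at most $1$.
\item Since $HH^\top - BSS^\top B^\top = BSE^\top + ES^\top B^\top + EE^\top$, the middle factor (without $1/\delta$) has norm at most $\|W\|_2(2\|S\|_2\|E\|_2+\|E\|_2^2)$.
\item The last factor together with the $1/\delta$ equals $(\delta I + G_H)^{-1}$, whose norm is $1/(\delta+\sigma_\mathrm{min}^2(W^{1/2}H))$.
\item The outer factors $W^{-1/2}$ contribute $\|W^{-1}\|_2$, and $\|W\|_2\|W^{-1}\|_2=\kappa(W)$.
\end{itemize}
Together these give exactly the first term of $\gamma(\delta)$.

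\emph{Bias term.} The push-through identity $S(S^\top K S+\delta I)^{-1}S^\top = (SS^\top K + \delta I)^{-1}SS^\top$ yields $\tilde P_{BS}^W = B(SS^\top K+\delta I)^{-1}SS^\top B^\top$. Writing $P_B^W = BK^{-1}B^\top$ and factoring out $(SS^\top K+\delta I)^{-1}$ on the left and $K^{-1}$ on the right gives
\[
\tilde P_{BS}^W - P_B^W = -\delta\, B(SS^\top K+\delta I)^{-1}K^{-1}B^\top .
\]
Therefore the bias term is at most $\delta\|W^{-1}\|_2\,\|(SS^\top K+\delta I)^{-1}\|_2$.

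\emph{Main obstacle.} It remains to show $\|(X+\delta I)^{-1}\|_2 \le 1/(\sigma_\mathrm{min}(X)+\delta)$ for $X = SS^\top K$. This $X$ has the same singular values as its transpose $KSS^\top = B^\top W B SS^\top$, which matches the statement. Full rank of $S$ (rank $d$) makes $SS^\top$ positive definite. However, $X$ is not symmetric, so $\sigma_\mathrm{min}(X+\delta I)\ge\sigma_\mathrm{min}(X)+\delta$ is not automatic.

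I would first expand $\|(X+\delta I)v\|^2 = \|Xv\|^2 + 2\delta\, v^\top X v + \delta^2\|v\|^2$. The cross term is nonnegative because $v^\top SS^\top K v\ge 0$ for a product of two positive definite matrices is not guaranteed, so this expansion alone does not close the gap. As an alternative I would try a symmetrization: write $X+\delta I = (SS^\top + \delta K^{-1})K$, or pass to the congruent form $K^{1/2}SS^\top K^{1/2} + \delta I$. The latter is symmetric with eigenvalues at least $\lambda_\mathrm{min}(K^{1/2}SS^\top K^{1/2})+\delta$. I would then relate $\lambda_\mathrm{min}(K^{1/2}SS^\top K^{1/2})$ back to $\sigma_\mathrm{min}(B^\top W B SS^\top)$, possibly at the cost of a constant, and absorb that constant in the bias term.
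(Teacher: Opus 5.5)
Your triangle-inequality decomposition and your variance bound are complete and coincide with the paper's proof. Your bias computation is also correct up to $\|\tilde P_{BS}^W-P_B^W\|_2\le\delta\|W^{-1}\|_2\,\|(SS^\top K+\delta I)^{-1}\|_2$. This is exactly the split the paper makes, in transposed form $\|W_B^{-1}\|_2\,\|(\frac1\delta W_BSS^\top+I)^{-1}\|_2$. The paper then simply asserts $\sigma_{\min}(\frac1\delta W_BSS^\top+I)=(\sigma_{\min}(W_BSS^\top)+\delta)/\delta$, which is precisely the inequality you flag.

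Your suspicion is justified, and this is where your proposal has a genuine gap. Once $K^{-1}$ has been split off, the remaining factor cannot be bounded by any constant multiple of $1/(\sigma_{\min}(X)+\delta)$. For example, $X=\begin{bmatrix}1&M\\0&2\end{bmatrix}$ is diagonalizable with positive spectrum. Hence it has the form $SS^\top K$ with $SS^\top,K$ SPD, and every SPD $K$ arises as $B^\top WB$. Yet for $\delta=1$ we get $\sigma_{\min}(X+I)=\det(X+I)/\|X+I\|_2\le 6/M$, i.e.\ $\|(X+I)^{-1}\|_2\ge M/6$, while $1/(\sigma_{\min}(X)+1)<1$.

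So your hedged completion does not prove the statement. Symmetrizing $(SS^\top K+\delta I)^{-1}=K^{-1/2}(K^{1/2}SS^\top K^{1/2}+\delta I)^{-1}K^{1/2}$ \emph{after} the split costs a data-dependent factor $\sqrt{\kappa(K)}$. That factor does not appear in $\gamma(\delta)$ and cannot be ``absorbed''.

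The missing idea is to symmetrize \emph{before} using submultiplicativity, and to notice that the comparison between $\lambda_{\min}$ and $\sigma_{\min}$ goes in your favor. From your own identity,
\begin{align*}
\tilde P_{BS}^W-P_B^W &= -\delta\,B\,(KSS^\top K+\delta K)^{-1}B^\top\\
&= -\delta\,BK^{-1/2}\bigl(K^{1/2}SS^\top K^{1/2}+\delta I\bigr)^{-1}K^{-1/2}B^\top .
\end{align*}
So the bias term is at most $\delta\|K^{-1}\|_2/\bigl(\lambda_{\min}(K^{1/2}SS^\top K^{1/2})+\delta\bigr)$. There is no loss here, because $\|K^{-1/2}\|_2^2=\|K^{-1}\|_2$.

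The symmetric matrix $K^{1/2}SS^\top K^{1/2}$ is similar to $KSS^\top=B^\top WBSS^\top$. For any square $Y$ with $Yv=\lambda v$ and $\|v\|=1$, one has $\sigma_{\min}(Y)\le\|Yv\|=|\lambda|$. Hence $\lambda_{\min}(K^{1/2}SS^\top K^{1/2})\ge\sigma_{\min}(B^\top WBSS^\top)$. Combined with $\|K^{-1}\|_2\le\|W^{-1}\|_2$, this yields exactly the second term of $\gamma(\delta)$. It closes your proof without any extra constant and also repairs the unjustified step in the paper's argument.
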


The proof can be found in the Appendix. 
The first term in $\gamma(\delta)$ quantifies the effect of the noise $E$. The second term is the bias that is introduced through $\delta$.
The term $\sigma_\mathrm{min}(B^\top WBSS^\top)$ is related to the minimal signal content of the data, and it reduces the bias term in $\gamma(\delta)$.
Since the projectors are weighted by the matrix $W$, the approximation error also depends on the conditioning of $W$.

Theorem~\ref{thm:bound} enables us to provide a bound on the error introduced in \gls{DeePC}~\eqref{eq:DeePC} in the unconstrained case due to using noisy data to approximate problem~\eqref{eq:behavioral_control_proj}.
\begin{cor}
    Let $w^\star_\mathrm{DeePC}$ and $w^\star_\mathcal{B}$ be the solutions to~\eqref{eq:DeePC} and \eqref{eq:behavioral_control_proj}, respectively, with $\mathcal{C}=\mathbb{R}^{qL}$.
    Then,
    $$
    \|w^\star_\mathrm{DeePC} - w^\star_\mathcal{B}\| \leq \gamma(\delta)\left\|W \begin{bmatrix}
        \hat{w}_\mathrm{ini} \\ w_\mathrm{ref}
    \end{bmatrix} \right\|.
    $$
\end{cor}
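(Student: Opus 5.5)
The proof only needs to write both solutions as linear maps applied to the same vector, and then use Theorem~\ref{thm:bound} together with submultiplicativity of the induced $2$-norm. Define $v := W\begin{bmatrix}\hat{w}_\mathrm{ini}^\top & w_\mathrm{ref}^\top\end{bmatrix}^\top$.

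For the data-driven solution, Lemma~\ref{lemma:DeePC} combined with Remark~\ref{rmk:DeePC_explicit_sol} gives $w^\star_\mathrm{DeePC} = \tilde{P}_H^W v$ in the unconstrained case, with $\delta = \lambda_g > 0$. For the model-based solution, problem~\eqref{eq:behavioral_control_proj} with $\mathcal{C}=\mathbb{R}^{qL}$ is a weighted least-squares problem over $w = B\alpha$. Its normal equations $B^\top W B\alpha = B^\top W \begin{bmatrix}\hat{w}_\mathrm{ini}^\top & w_\mathrm{ref}^\top\end{bmatrix}^\top$ give $w^\star_\mathcal{B} = B(B^\top W B)^{-1}B^\top v = P_B^W v$, as already stated in the preliminaries. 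The inverse exists because $W \succ 0$ and $B$ has full column rank.

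Subtracting the two expressions gives $w^\star_\mathrm{DeePC} - w^\star_\mathcal{B} = (\tilde{P}_H^W - P_B^W)v$. Hence
\[
\|w^\star_\mathrm{DeePC} - w^\star_\mathcal{B}\| \le \|\tilde{P}_H^W - P_B^W\|_2\,\|v\| \le \gamma(\delta)\|v\|,
\]
where the last inequality is Theorem~\ref{thm:bound}, taken with $\delta = \lambda_g$.

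There is no real obstacle here. The only care needed is to check that the corollary inherits the hypotheses of Theorem~\ref{thm:bound}: $\delta = \lambda_g > 0$, $W$ full rank (it is positive definite), and $S$ full rank in the decomposition $H = BS + E$. These should be recorded as standing assumptions. One should also fix the convention that $B$ is a basis with orthonormal columns, i.e.\ $B^\top B = I$, matching the theorem's setting. Since $P_B^W$ does not depend on the choice of basis, this choice costs nothing.
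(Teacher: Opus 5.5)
Your proof is correct and follows the route the paper intends. The paper states the corollary without a separate proof, as an immediate consequence of Remark~\ref{rmk:DeePC_explicit_sol} ($w^\star_\mathrm{DeePC} = \tilde{P}_H^W W\begin{bmatrix}\hat{w}_\mathrm{ini}^\top & w_\mathrm{ref}^\top\end{bmatrix}^\top$ with $\delta=\lambda_g$), the closed form $w^\star_\mathcal{B} = P_B^W W\begin{bmatrix}\hat{w}_\mathrm{ini}^\top & w_\mathrm{ref}^\top\end{bmatrix}^\top$, submultiplicativity, and Theorem~\ref{thm:bound}; your remarks about inheriting the theorem's hypotheses ($\delta=\lambda_g>0$, $W$ and $S$ full rank) and using $B^\top B=I$ as in the appendix are appropriate bookkeeping that the paper leaves implicit.
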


\begin{rmk} \label{rmk:unweighted_bound}
For unweighted soft projectors, the bound in Theorem~\ref{thm:bound} simplifies to
\begin{align} \label{eq:bound_unweighted}
        \|\tilde{P}_H - P_B\|_2 \leq \frac{2\|S\|_2\|E\|_2 + \|E\|_2^2}{\sigma_\mathrm{min}^2(H) + \delta} + \frac{\delta}{\sigma^2_\mathrm{min}(S)+ \delta}.
\end{align}
Clearly, soft projectors are consistent in the sense that $\|\tilde{P}_H - P_B\|_2 \to 0$ as $\|E\|_2 \to 0$ and $\delta\to0$, and the error bound changes smoothly with $\delta$.
Importantly, the approximation error does not depend on the true dimension of $\mathcal{B}_L$.

This is in contrast with the approach to data-driven control in the behavioral setting that explicitly estimates $\mathcal{B}_L$ as a low-dimensional subspace from data~\cite{sasfi2025great,bharadwaj2025robust,jin2026sensitivitysubspacepredictorbehavioral}.
If the estimated subspace and $\mathcal{B}_L$ and are of different dimensions, the gap metric is 1, meaning that the estimate can be arbitrarily bad.
Hence, this approach is sensitive to order selection, which is a major limitation. \hfill \QED
\end{rmk}

\section{Data-driven control using soft projections} \label{sec:novel_methods}
The solution of~\eqref{eq:DeePC} can only be interpreted through a soft projection in the unconstrained case, i.e., when $\mathcal{C} = \mathbb{R}^{qL}$.
In this section, we propose a novel approach to approximately enforce the constraint $w \in \mathcal{B}_L$ in the behavioral control problem~\eqref{eq:behavioral_control_proj}.
If a basis $B$ for the behavior is known, the constraint $w \in \mathcal{B}_L$ can be written as $(I - P_B) w = 0$.
However, this constraint for the soft projector only holds if $w=0$ in general.
To address this issue in the data-driven case, we lift the approximated constraint into the cost, leading to the following convex problem
\begin{align} \label{eq:indicator_approx_squared}
     \min_{w\in\mathcal{C}} \left \|w - \begin{bmatrix}
    \hat{w}_\mathrm{ini} \\ w_\mathrm{ref}
    \end{bmatrix}\right \|^2_{W} + \alpha \|(I - \tilde{P}_H)w\|^2. 
\end{align}
In the above formulation, $\alpha$ is a tuning parameter that controls how strong the approximation of the constraint $w\in\mathcal{B}_L$ is enforced.
Furthermore, $\delta$ in $\tilde{P}_H$ controls how much the data is ``denoised" by softening the projection.
Note that if the data is noise-free and persistently exciting, $\tilde{P}_H\to P_B$ as $\delta\to0$, and thus,~\eqref{eq:indicator_approx_squared} reduces to~\eqref{eq:behavioral_control_proj} as $\alpha\to\infty$.

For orthogonal projectors, it holds that $(I-P_B)^2 = I-P_B$, leading to $\|(I-P_B)w\|^2 = w^\top(I-P_B)w$. 
However, this is not true for the soft projector $\tilde{P}_H$.
Thus, approximating the constraint $w\in\mathcal{B}_L$ by penalizing $w^\top(I-\tilde{P}_H)w$ gives rise to the alternative control problem
\begin{align} \label{eq:indicator_approx}
     \min_{w\in\mathcal{C}} \left \|w - \begin{bmatrix}
    \hat{w}_\mathrm{ini} \\ w_\mathrm{ref}
    \end{bmatrix}\right \|^2_{W} + \frac{\hat{\alpha}}{\delta} w^\top(I - \tilde{P}_H)w. 
\end{align}
We scaled the tuning parameter $\hat{\alpha}$ by $\delta$ to highlight the connection between problem~\eqref{eq:indicator_approx} and \gls{DeePC}~\eqref{eq:DeePC} in the unconstrained case, as formalized below.
\begin{thm} \label{thm:generalizing_DeePC}
   Let $w^\star_\mathrm{DeePC}$ and $w^\star_\eqref{eq:indicator_approx}$ be the solutions to~\eqref{eq:DeePC} and \eqref{eq:indicator_approx}, respectively, with $\mathcal{C}=\mathbb{R}^{qL}$.
   Assume that $H$ is full row rank and $\lambda_g,\delta>0$.
   If $\hat{\alpha} = \lambda_g$, then $w^\star_\eqref{eq:indicator_approx} \to w^\star_\mathrm{DeePC}$ as $\delta \to 0$. 
\end{thm}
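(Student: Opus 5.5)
The plan is to compute both unconstrained solutions in closed form and show that the closed form of \eqref{eq:indicator_approx} converges, as $\delta\to0$, to an expression that coincides algebraically with the DeePC solution of Lemma~\ref{lemma:DeePC}. Throughout, let $r:=\begin{bmatrix}\hat{w}_\mathrm{ini}^\top & w_\mathrm{ref}^\top\end{bmatrix}^\top$ and $M:=HH^\top$, which is invertible because $H$ has full row rank.

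First, I will solve \eqref{eq:indicator_approx} with $\mathcal{C}=\mathbb{R}^{qL}$. The eigenvalues of $\tilde{P}_H$ lie in $[0,1)$, so $I-\tilde{P}_H\succ0$. Since also $W\succ0$, the objective is a strictly convex quadratic. Setting its gradient to zero gives the unique minimizer
\[
w^\star_{\eqref{eq:indicator_approx}}=\Bigl(W+\tfrac{\hat{\alpha}}{\delta}(I-\tilde{P}_H)\Bigr)^{-1}Wr .
\]
Next, I will use the covariance form \eqref{eq:soft_proj_covariance} with $W=I$. It gives $I-\tilde{P}_H=\bigl(\tfrac{1}{\delta}M+I\bigr)^{-1}=\delta(M+\delta I)^{-1}$, and hence
\[
\tfrac{\hat{\alpha}}{\delta}(I-\tilde{P}_H)=\hat{\alpha}(M+\delta I)^{-1}.
\]
This identity removes the apparent $1/\delta$ blow-up, and it is the key step.

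Because $M$ is invertible, $(M+\delta I)^{-1}\to M^{-1}$ as $\delta\to0$. The limiting matrix $W+\hat{\alpha}M^{-1}$ is positive definite, so it is invertible. By continuity of matrix inversion at invertible matrices,
\[
w^\star_{\eqref{eq:indicator_approx}}\to(W+\hat{\alpha}M^{-1})^{-1}Wr \quad\text{as } \delta\to0 .
\]

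Second, I will rewrite the DeePC solution from Lemma~\ref{lemma:DeePC}. By the push-through identity $H(H^\top WH+\lambda_gI)^{-1}=(HH^\top W+\lambda_gI)^{-1}H$,
\[
w^\star_\mathrm{DeePC}=(MW+\lambda_gI)^{-1}MWr .
\]
The inverse $(MW+\lambda_gI)^{-1}$ exists because $MW$ is similar to $W^{1/2}MW^{1/2}\succeq0$, so all its eigenvalues are nonnegative. Factoring $MW+\lambda_gI=M(W+\lambda_gM^{-1})$ gives
\[
(MW+\lambda_gI)^{-1}M=(W+\lambda_gM^{-1})^{-1}M^{-1}M=(W+\lambda_gM^{-1})^{-1},
\]
so $w^\star_\mathrm{DeePC}=(W+\lambda_gM^{-1})^{-1}Wr$. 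With $\hat{\alpha}=\lambda_g$ this is exactly the limit obtained above, which proves the claim.

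The only delicate point is the $1/\delta$ scaling in \eqref{eq:indicator_approx}. It is handled by the covariance-form identity for $I-\tilde{P}_H$ together with invertibility of $HH^\top$. Without full row rank, $(M+\delta I)^{-1}$ would diverge on $\ker H^\top$, and the limit would need to be argued separately. Everything else is routine matrix algebra.
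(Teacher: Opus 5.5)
Your proposal is correct and follows essentially the same route as the paper. Both proofs use the push-through identity to rewrite the DeePC solution as $(W+\lambda_g(HH^\top)^{-1})^{-1}W$ applied to the stacked reference, and both use the covariance form to turn $\tfrac{\hat{\alpha}}{\delta}(I-\tilde{P}_H)$ into $\hat{\alpha}(HH^\top+\delta I)^{-1}$, with your appeal to continuity of inversion at the positive definite limit $W+\hat{\alpha}(HH^\top)^{-1}$ simply making explicit the final limit step the paper leaves implicit.
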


\begin{proof}
    We first reformulate the solution $w^\star_\mathrm{DeePC}$ from Lemma~\ref{lemma:DeePC}.
    Using the identity $(I +AB)^{-1}A = A(I+BA)^{-1}$ with $A = H^\top$ and $B=\lambda_g^{-1}WH$ leads to
    \begin{align*}
        H(H^\top W H + & \lambda_g I)^{-1} H^\top  = \lambda_g^{-1} H(\lambda_g^{-1} H^\top W H + I)^{-1}H^\top \\
        &= \lambda_g^{-1} H H^\top(\lambda_g^{-1}WHH^\top+I)^{-1} \\
        & = \left((\lambda_g^{-1}WHH^\top +I)\cdot\lambda_g(HH^\top)^{-1}\right)^{-1} \\
        & = \left(W + \lambda_g(HH^\top)^{-1}\right)^{-1}.
    \end{align*}
    Thus, $w^\star_\mathrm{DeePC} = \left(W + \lambda_g(HH^\top)^{-1}\right)^{-1}W\begin{bmatrix}
        \hat{w}_\mathrm{ini}^\top & w_\mathrm{ref}^\top
    \end{bmatrix}^\top.$
    Furthermore,~\eqref{eq:indicator_approx} without constraints is a regularized least-squares problem, whose solution is
    \begin{align*}
    w^\star_\eqref{eq:indicator_approx} &= \left(W + \hat{\alpha}/\delta \cdot(I-\tilde{P}_H)\right)^{-1}W\begin{bmatrix}
        \hat{w}_\mathrm{ini}^\top & w_\mathrm{ref}^\top
    \end{bmatrix}^\top \\
    & = \left(W + \hat{\alpha}(HH^\top+\delta I)^{-1}\right)^{-1}W\begin{bmatrix}
        \hat{w}_\mathrm{ini}^\top & w_\mathrm{ref}^\top
    \end{bmatrix}^\top.
    \end{align*}
    Note that the inverses exist for any $\delta>0$, since $W$ is positive definite, $\hat{\alpha}>0$, and $H$ is full row rank.
    Therefore, if $\hat{\alpha}=\lambda_g$, then $w^\star_\eqref{eq:indicator_approx} \to w^\star_\mathrm{DeePC}$ as $\delta\to0$.
\end{proof}

Theorem~\ref{thm:generalizing_DeePC} shows that~\eqref{eq:indicator_approx} is a generalization of the \gls{DeePC} formulation.
In fact, the data matrix $H$ in \gls{DeePC} is not actually ``denoised", but the tuning parameter $\lambda_g$ only trades-off the control cost and the fit to the data $HH^\top$.
More importantly, the formulations~\eqref{eq:indicator_approx_squared} and \eqref{eq:indicator_approx} are interpretable through the approximate data-driven projection onto the behavior $\mathcal{B}_L$.
Even though we showed in Section~\ref{sec:DeePC_sp} that \gls{DeePC} performs a weighted soft projection in the unconstrained case, this interpretation is lost when constraints $\mathcal{C}$ are considered.
On the other hand, the interpretation of \eqref{eq:indicator_approx_squared} and \eqref{eq:indicator_approx} remain valid even in the constrained case.
Finally, the soft projector $\tilde{P}_H$ can be seen as a data-driven model of the system, which appears explicitly in problems~\eqref{eq:indicator_approx_squared} and \eqref{eq:indicator_approx}. 
Therefore, online model updates can be readily incorporated in the algorithm (even in the constrained case) as shown in Section~\ref{sec:online}.

\subsection{Comparison of the proposed control formulations}
Recall that the solution to~\eqref{eq:behavioral_control_proj} in the unconstrained case is $w^\star_\mathcal{B} = P_B^W W\begin{bmatrix}
    \hat{w}_\mathrm{ini}^\top & w_\mathrm{ref}^\top
\end{bmatrix}^\top$. 
The matrix $P_B^W W$ mapping the desired trajectory to the solution is low rank, reflecting that any feasible trajectory lies in the low-dimensional subspace $\mathcal{B}_L$.
When the solution is estimated from noisy data through soft projections, the low rank structure is destroyed.
Yet, the separation of the eigenvalues is desired to approximate the solution to~\eqref{eq:behavioral_control_proj} well.
As we established in Section~\ref{sec:soft_proj}, the eigenvalues of the soft projector can be filtered by tuning the parameter $\delta$.
Intuitively speaking, the term $(I - \tilde{P}_H)^2$ in~\eqref{eq:indicator_approx_squared} acts as a higher order filter that can separate the signal from noise in $H$ better than $(I - \tilde{P}_H)$.

We illustrate this difference by analyzing the simple case when $\mathcal{C}=\mathbb{R}^{qL}$ and $W=I$.
The proposed formulations~\eqref{eq:indicator_approx_squared} and \eqref{eq:indicator_approx} boil down to regularized least-squares problems, whose solution can be expressed in closed-from as $M_\eqref{eq:indicator_approx_squared}\begin{bmatrix}
    \hat{w}_\mathrm{ini}^\top & w_\mathrm{ref}^\top
\end{bmatrix}^\top$ and $M_\eqref{eq:indicator_approx}\begin{bmatrix}
    \hat{w}_\mathrm{ini}^\top & w_\mathrm{ref}^\top
\end{bmatrix}^\top$, respectively, where
\begin{align*}
    M_\eqref{eq:indicator_approx_squared} & = \left(I + \alpha\delta^2(HH^\top +\delta I )^{-2}\right)^{-1}, \\
    M_\eqref{eq:indicator_approx} & = \left(I + \hat{\alpha}(HH^\top+\delta I)^{-1} \right)^{-1}.
\end{align*}
The eigenvalues of the maps are 
\begin{align*}
    \lambda_i(M_\eqref{eq:indicator_approx_squared}) = \frac{(\sigma_i^2 + \delta)^2}{(\sigma_i^2 + \delta)^2 + \alpha\delta^2}; \quad \lambda_i(M_\eqref{eq:indicator_approx}) = \frac{\sigma_i^2 + \delta}{\sigma_i^2 + \delta + \hat{\alpha}},
\end{align*}
where $\sigma_i$ are the singular values of $H$.
Figure~\ref{fig:ev_plot} displays the eigenvalues for $\sigma_i=\{1000,100,50,30,20\}$ and $\alpha = \hat{\alpha}/\delta = 10^6$ as a function of $\delta$. 
Interpreting $(I - \tilde{P}_H)^2$ as a second order filter, we see that the eigenvalues of $M_\eqref{eq:indicator_approx_squared}$ are separated more clearly. 
This provides the intuition that formulation~\eqref{eq:indicator_approx_squared} is more robust to noise than formulation~\eqref{eq:indicator_approx}. 
In Section~\ref{sec:case_study}, we present a case study empirically confirming that this is indeed the case.
A comprehensive analysis of the differences between~\eqref{eq:indicator_approx_squared} and~\eqref{eq:indicator_approx} is subject of future work.

\begin{figure}
    \centering
    \includegraphics[width=\linewidth,trim= 36 0 36 0,clip]{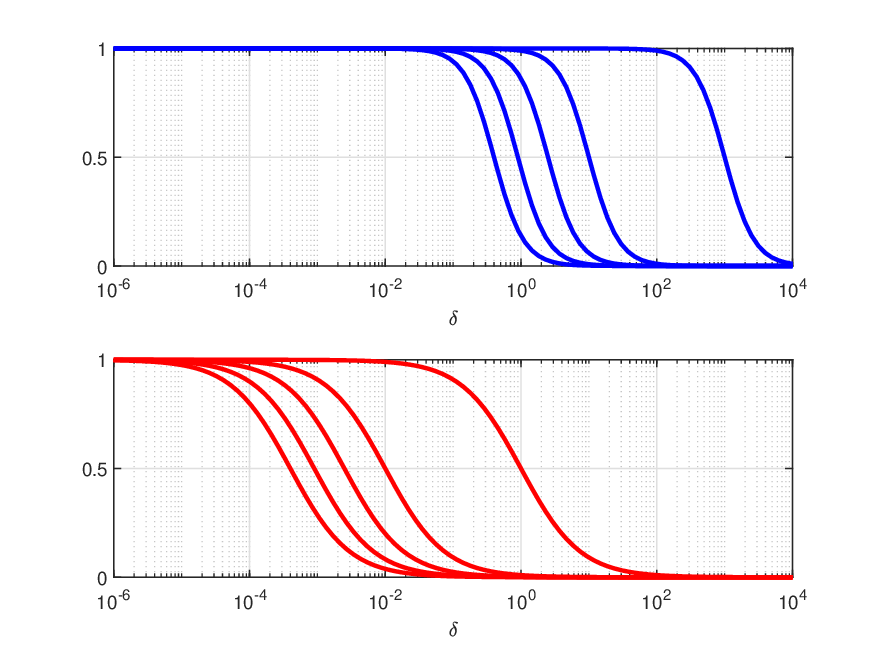}
    \caption{Eigenvalues of the closed-form solution mapping for~\eqref{eq:indicator_approx_squared} (top) and \eqref{eq:indicator_approx} (bottom) in the unconstrained case with $W=I$ and $\alpha = \hat{\alpha}/\delta = 10^6$. The singular values of the data matrix $H$ are $1000,100,50,30,$ and $20$.}
    \label{fig:ev_plot}
\end{figure}

\section{Online data-driven control} \label{sec:online}
A major benefit of using soft projections is the ease of incorporating online (new) data into data-driven control methods that are usually formulated with offline data.
The continuous adaptation of these methods is essential in many application domains, as real-world dynamical systems often change over time.
Furthermore, adapting the estimated projection onto the linear behavior $\mathcal{B}_L$ is particularly important because nonlinear systems can be effectively over-approximated as linear time-varying systems~\cite{berberich2024overview}.

Consider the scenario where the data matrix in~\eqref{eq:noisy_data} includes all trajectories observed up to the current time $t$, denoted as
$$
H_t = \begin{bmatrix}
    w^\mathrm{d}_{1,L} &w^\mathrm{d}_{2,L+1} & \dots & w^\mathrm{d}_{t-L,t}
\end{bmatrix}.
$$
While this time-varying matrix can be directly incorporated into the data-driven control problems~\eqref{eq:DeePC},~\eqref{eq:indicator_approx_squared}, or~\eqref{eq:indicator_approx}, such a naive implementation presents two primary challenges.
First, it is computationally inefficient because the problem size in~\eqref{eq:DeePC} scales with the number of columns in $H_t$, or the soft projector $\tilde{P}_{H_t}^W$ in formulations~\eqref{eq:indicator_approx_squared} and~\eqref{eq:indicator_approx} would need to be recomputed at every time step.
Second, the appropriate selection of parameters $\lambda_g$ or $\delta$ depends directly on the current data matrix $H_t$ to achieve a favorable bias-variance trade-off (cf. Theorem~\ref{thm:bound}).
Since $H_t$ evolves as new data is collected, it is necessary to re-tune the parameter to maintain good performance.

To address these challenges, we develop efficient update formulas for both the soft projector $\tilde{P}^W_{H_t}$ and the tuning parameter $\delta_t$. This enables direct tracking of the unconstrained \gls{DeePC} solution and allows the novel formulations from Section~\ref{sec:novel_methods} to be adapted online.
Inspired by~\cite{wax2023robust}, we present an efficient rank-one update for the weighted soft projector below.
\begin{prop} \label{prop:update}
    Given the most recent data trajectory $w^\mathrm{d}_{t-L,t}$, the weighted soft projector at time $t$ is
    \begin{align} \label{eq:req_formula_weighted}
        \tilde{P}_{H_t}^W = \tilde{P}_{H_{t-1}}^W + \frac{1}{\gamma_{t} \delta_{t-1}} \tilde{w}_t \tilde{w}_t^\top,
    \end{align}
    where 
    \begin{align*}
        \tilde{w}_t &= (I-\tilde{P}_{H_{t-1}}^W W) w^\mathrm{d}_{t-L,t}, \\
        \gamma_{t} & = 1 + \frac{1}{\delta_{t-1}} \tilde{w}_t^\top W w^\mathrm{d}_{t-L,t}, \\
        \delta_t & = \epsilon \cdot \mathrm{tr}(H_t H_t^\top) = \delta_{t-1} + \epsilon \|w^\mathrm{d}_{t-L,t}\|^2.
    \end{align*}
\end{prop}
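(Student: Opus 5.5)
The plan is to work from the covariance form \eqref{eq:soft_proj_covariance} rather than the $D$-by-$D$ form \eqref{eq:soft_proj}, because appending a column to the data matrix is then a rank-one change of a fixed-size matrix. Write $w := w^\mathrm{d}_{t-L,t}$, so that $H_t = [H_{t-1}~w]$ and hence $H_tH_t^\top = H_{t-1}H_{t-1}^\top + ww^\top$. For a fixed regularization $\delta$ define
\begin{align*}
G_t := \left(\tfrac{1}{\delta}WH_tH_t^\top W + W\right)^{-1}, \qquad \tilde P^W_{H_t} = W^{-1} - G_t .
\end{align*}
Here $\delta$ is taken to be $\delta_{t-1}$, which is the value appearing in the statement. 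Then
\begin{align*}
G_t^{-1} = G_{t-1}^{-1} + \tfrac{1}{\delta}(Ww)(Ww)^\top .
\end{align*}
This is exactly the setting of the Sherman--Morrison formula. Note that $G_t$ is symmetric positive definite since $W\succ 0$, so all inverses exist.

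Applying Sherman--Morrison gives
\begin{align*}
G_t = G_{t-1} - \frac{G_{t-1}Www^\top WG_{t-1}}{\delta + w^\top W G_{t-1} W w}.
\end{align*}
The key identification is $G_{t-1}W = W^{-1}W - \tilde P^W_{H_{t-1}}W = I - \tilde P^W_{H_{t-1}}W$. This yields the following two facts.
\begin{itemize}
\item $G_{t-1}Ww = \tilde w_t$.
\item By symmetry of $G_{t-1}$, $w^\top WG_{t-1} = \tilde w_t^\top$, and also $w^\top W G_{t-1}Ww = \tilde w_t^\top W w$.
\end{itemize}
Substituting into $\tilde P^W_{H_t} = W^{-1} - G_t$ and pulling $\delta_{t-1}$ out of the denominator gives
\begin{align*}
\tilde P^W_{H_t} = \tilde P^W_{H_{t-1}} + \frac{\tilde w_t\tilde w_t^\top}{\delta_{t-1}\gamma_t},
\end{align*}
which is \eqref{eq:req_formula_weighted}. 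The denominator is positive: $\gamma_t = 1 + \delta_{t-1}^{-1} w^\top WG_{t-1}Ww \geq 1$, since $WG_{t-1}W \succ 0$.

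The $\delta_t$ recursion is immediate from the definition: $\mathrm{tr}(H_tH_t^\top) = \mathrm{tr}(H_{t-1}H_{t-1}^\top) + \|w\|^2$.

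The main obstacle is interpretive rather than computational. The rank-one identity is exact only when the same regularization $\delta_{t-1}$ is used in both $\tilde P^W_{H_{t-1}}$ and $\tilde P^W_{H_t}$. Rescaling to $\delta_t$ is not a rank-one change: it rescales the whole matrix $WH_tH_t^\top W/\delta$. I would therefore state the result precisely as follows: \eqref{eq:req_formula_weighted} gives the soft projector of $H_t$ with parameter $\delta_{t-1}$, and $\delta_t$ is then updated by the trace rule for use at the next step. This is the scheme of \cite{wax2023robust}. If an exact projector with parameter $\delta_t$ were required, I would instead try to absorb the ratio $\delta_t/\delta_{t-1}$ by also rescaling $W$. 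However, that would not keep the update rank-one, so I would present the fixed-$\delta$ exact update as the proposition's content.
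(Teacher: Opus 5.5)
Your proof is correct and is essentially the derivation the paper defers to \cite{wax2023robust}: Sherman--Morrison applied to the covariance form \eqref{eq:soft_proj_covariance}, with the identification $G_{t-1}W = I-\tilde P^W_{H_{t-1}}W$ turning the rank-one correction into $\tilde w_t\tilde w_t^\top/(\delta_{t-1}\gamma_t)$. Your caveat is also accurate and more precise than the paper's statement: the identity is exact only for the projector of $H_t$ computed with the same parameter $\delta_{t-1}$, so iterating it while $\delta_t$ grows yields a column-reweighted soft projector rather than $\tilde P^W_{H_t}$ with parameter $\delta_t$.
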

Proposition~\ref{prop:update} can be derived following the same steps as in~\cite{wax2023robust}, and for the unweighted case ($W=I$), the update rule~\eqref{eq:req_formula_weighted} naturally reduces to the formula proposed therein.
This recursive formulation offers an intuitive geometric interpretation.
At time $t-1$, the soft projector $\tilde{P}_{H_{t-1}}^W$ serves as an estimate of the true projector $P_{\mathcal{B}_L}^W$.
When new data $w^\mathrm{d}_{t-L,t}$ becomes available, it reveals additional information about the behavior $\mathcal{B}_L$. 
The vector $\tilde{w}_t$ captures the discrepancy between this newly observed trajectory and the current estimate, effectively acting as a weighted soft projection of $w^\mathrm{d}_{t-L,t}$ onto the orthogonal complement of $\mathrm{col}(H_{t-1})$.
Consequently, the soft projector is updated in~\eqref{eq:req_formula_weighted} by incorporating the rank-one term $\tilde{w}_t\tilde{w}_t^\top$, which injects this new information into the estimate, appropriately normalized by the scalar $1 / (\gamma_t \delta_{t-1})$. 
Furthermore, as argued in~\cite{wax2021detection,wax2023robust}, setting $\delta_t = \epsilon \cdot \mathrm{tr}(H_t H_t^\top)$ for a small, possibly time-varying scalar $\epsilon$ ensures that $\tilde{P}_{H_t}^W$ remains a good approximation of the true projector.
Finally, if the underlying system behavior is time-varying, the update rule can be readily modified to include a forgetting factor that discounts older data~\cite{wax2023robust}.

The recursive update formulas presented in Proposition~\ref{prop:update} enable the efficient online implementation of the proposed data-driven control methods. 
Given an initial time $t_0 \geq d$, we first compute the initial soft projector $\tilde{P}_{H_{t_0}}^W$ according to~\eqref{eq:soft_proj}, initializing the tuning parameter as $\delta_0 = \epsilon \cdot \mathrm{tr}(H_{t_0}H_{t_0}^\top)$. 
Then, at each subsequent time step $t > t_0$, we proceed as follows:
\begin{enumerate}[leftmargin=*]
    \item Set $\hat{w}_\mathrm{ini} = w_{t-T_\mathrm{ini}-1,t-1}$ for the controller.
    \item \label{step1} Determine the optimal trajectory. This is achieved either by evaluating the unconstrained \gls{DeePC} solution~\eqref{eq:DeePC} directly as $w^\star_\mathrm{DeePC} = \tilde{P}^W_{H_{t-1}} W\begin{bmatrix}
    \hat{w}_\mathrm{ini}^\top & w_\mathrm{ref}^\top
\end{bmatrix}^\top$ (cf. Remark~\ref{rmk:DeePC_explicit_sol}), or by solving the constrained formulations~\eqref{eq:indicator_approx_squared} or~\eqref{eq:indicator_approx} utilizing the most recent estimate $\tilde{P}_{H_{t-1}}^W$.
    \item Extract the current control input $u_{t}$ from the optimal trajectory, apply it to the system, and measure the corresponding output $y_{t}$.
    \item Update the most recent data trajectory $w^\mathrm{d}_{t-L,t} = \begin{bmatrix}
        u_{t-L}^\top & \dots & u_{t-1}^\top & u_{t}^\top & y_{t-L}^\top & \dots & y_{t-1}^\top & y_{t}^\top
    \end{bmatrix}^\top$,
    \item \label{step_last} Update the soft projector $\tilde{P}_{H_t}^W$ and the parameter $\delta_t$ using the recursive formulas in Proposition~\ref{prop:update}.
\end{enumerate}

The adaptation scheme outlined above provides a computationally efficient alternative to the naive implementation by relying on rank-one updates.
Specifically, the adaptation of new data in step~\ref{step_last} is highly efficient, as its most expensive operation (calculating the vector $\tilde{w}_t$ in Proposition~\ref{prop:update}) requires only $\mathcal{O}((qL)^2)$ operations.
Furthermore, the scheme systematically manages the time-varying tuning of the regularizer $\delta_t$.
Nevertheless, one should exercise caution when using closed-loop data to continuously update the soft projector, as it can lead to the loss of persistence of excitation and the degradation of the data-driven system model.
This is reflected in our theoretical analysis in Section~\ref{sec:soft_proj} and in Remark~\ref{rmk:unweighted_bound} in particular.
The loss of excitation causes the signal content (captured by $\sigma_{\min}(S)$ in~\eqref{eq:bound_unweighted}) to shrink, which leads to a growing estimation error for the soft projector.

\section{Case study} \label{sec:case_study}
We consider the double spring-mass-damper system from~\cite{kerz2023data} consisting of two rotating discs.
The MATLAB code reproducing the results is available online\footnote{\url{https://gitlab.ethz.ch/asasfi/SP_for_robust_DDPC
}}.
The angle and angular velocity of the two disks are the states, and the outputs are the two angles.
The input is the torque on the first disk. 
Disturbance acts on the system in the form of torques on both disks, which are modeled as zero mean white noise.
The length of the initial and future trajectories are chosen as $T_\mathrm{ini} = 2$ and $T_\mathrm{f} = 12$.
The objective is to track the reference at $\begin{bmatrix} 0.8 & 0.8\end{bmatrix}^\top$, starting from zero initial state.
The constraints are $\begin{bmatrix}-1&-1\end{bmatrix}^\top \leq y_k \leq \begin{bmatrix}1&1\end{bmatrix}^\top$ and $-2\leq u_k \leq 2$ for all $k=t,\dots,t+T_\mathrm{f}-1$.
We build the data matrix $H$ by simulating the system for $1000$ steps starting from zero initial state and applying zero mean white noise as input.
The weights in the cost are $Q = I,~R = 0.01\cdot I, \lambda_\sigma = 10^6$, and $\alpha=10^6$.

We compare the open-loop predictions obtained by solving \gls{DeePC}~\eqref{eq:DeePC} and~\eqref{eq:indicator_approx_squared}.
The performance of problem~\eqref{eq:indicator_approx} was similar to that of \gls{DeePC}, and therefore, we do not include it here.
The parameters $\lambda_g$ and $\delta$ are tuned through validation from the intervals $[10,10^7]$ and $[10^{-3},10^3]$, respectively.
The open-loop prediction error and realized cost during validation are plotted in Figure~\ref{fig:validation} as the function of $\lambda_g$ (or $\delta$ for~\eqref{eq:indicator_approx_squared}).
We choose $\lambda_g = 54.29$ and $\delta = 0.091$, as these values achieve the lowest open-loop cost on the validation set.
With the validated parameter values, the predicted trajectories are tested for both methods.
Both the validation and the testing was performed under a 100 noise realizations.
The whole experiment was repeated for signal-to-noise ratios (SNR) 10, 5, and 3.
Table~\ref{tab:snr10} shows the mean realized open-loop cost and the prediction error statistics for different SNR levels.

It can be seen in Figure~\ref{fig:validation} that the parameter $\lambda_g$ trades off prediction accuracy versus realized cost for \gls{DeePC}. 
On the other hand, both the prediction error and the realized cost are minimized roughly on the same interval around $[10^{-2},1]$ for the proposed method in~\eqref{eq:indicator_approx_squared}. 
Both the validation curves in Figure~\ref{fig:validation} and the results in Table~\ref{tab:snr10} show that the proposed method~\eqref{eq:indicator_approx_squared} achieves superior performance compared to \gls{DeePC}.
As the signal-to-noise ratio shrinks, the difference between the mean realized costs grows, suggesting that~\eqref{eq:indicator_approx_squared} overperforms \gls{DeePC} due to its increased robustness.

\begin{table}
\begin{tabular}{r|c|c|c|c}
     SNR & method & mean cost & mean pred. err. & pred. err. variance  \\
     \hline \hline
     \multirow{2}{2em}{10} & \eqref{eq:DeePC} & 5.0535 & 0.61382 & 0.04813 \\
& \eqref{eq:indicator_approx_squared} & 4.6658 & 0.24934 & 0.021772 \\
\hline
\multirow{2}{1.5em}{5} & \eqref{eq:DeePC} & 6.843 & 0.70232 & 0.14774 \\
& \eqref{eq:indicator_approx_squared} & 5.4872 & 0.5146 & 0.093555 \\
\hline
\multirow{2}{1.5em}{3} & \eqref{eq:DeePC} & 9.4343 & 1.3884 & 0.47615 \\
& \eqref{eq:indicator_approx_squared}&  6.8728 & 0.94667 & 0.30057
\end{tabular}
\caption{Realized open-loop cost and prediction error statistics for \gls{DeePC}~\eqref{eq:DeePC} and the proposed method~\eqref{eq:indicator_approx_squared}.}
    \label{tab:snr10}
\end{table}

\begin{figure}
    \centering
    \includegraphics[width=1\linewidth,trim= 30 0 30 0,clip]{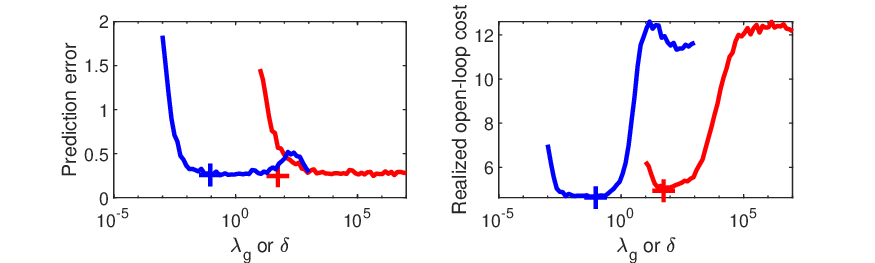}
    \caption{Prediction error and realized cost for \gls{DeePC} (red) and the proposed method (blue) during validation under SNR = 10. The cross denotes the chosen values.}
    \label{fig:validation}
\end{figure}

\section{Conclusion} \label{sec:conclusion}
We introduced the use of soft projectors to approximate the true behavioral projector directly from noisy data, providing an alternative to regularization in direct data-driven predictive control.
We established a rigorous bound on the approximation error that highlights a clear bias-variance trade-off and, importantly, remains independent of the underlying system order.
Building upon these theoretical insights, we developed intuitive generalizations of regularized \gls{DeePC} schemes that demonstrated superior performance in our simulated case study.
Finally, we derived efficient update formulas for the soft projectors, providing a computationally practical approach for adapting these control methods online with streaming data.
Future work includes investigating further approaches to approximate the behavioral projection and their interpretation in the stochastic setting. 

\section*{Appendix}
\appendices

\begin{proof}[Theorem~\ref{thm:bound}]
    The triangle inequality gives $\|\tilde{P}^W_H - P^W_B\|_2 \leq \|\tilde{P}^W_H - \tilde{P}^W_{BS}\|_2 + \|\tilde{P}_{BS}^W - P_B^W\|_2$. Let us define $W_B := B^\top W B$. As $B^\top B =I$, we have
    \begin{align*}
    \|\tilde{P}_{BS}^W & - P_B^W\|_2 \\
    & = \|B\left(S(S^\top W_B S+\delta I)^{-1}S^\top - W_B^{-1}\right)B^\top\|_2 \\ 
    & = \|\tilde{P}_S^{W_B} - W_B^{-1}\|_2 = \|(1/\delta \cdot W_BSS^\top W_B + W_B)^{-1}\|_2 \\
    & \leq \|W_B^{-1}\|_2\cdot \|(1/\delta \cdot W_BSS^\top + I)^{-1}\|_2\\
    & = \|W_B^{-1}\|_2 \cdot \sigma_\mathrm{min}^{-1}(1/\delta \cdot W_B SS^\top + I) \\
    & = \frac{\delta \|W_B^{-1}\|_2}{\sigma_\mathrm{min}(W_B SS^\top ) + \delta} 
    \leq \frac{\delta \|W^{-1}\|_2}{\sigma_\mathrm{min}(W_B SS^\top ) + \delta}.
    \end{align*}
    Let us define $C_1 :=1/\delta\cdot W^{1/2}HH^\top W^{1/2}+I$ and $C_2:= 1/\delta\cdot W^{1/2}BSS^\top B^\top W^{1/2}+I$. The first term in the triangular inequality is then
    \begin{align*}
    \|\tilde{P}_H^W -  \tilde{P}_{BS}^W\| &= \|W^{-1/2}(C_2^{-1} - C_1^{-1})W^{-1/2}\| \\
    & \leq \|W^{-1}\| \cdot \|C_1^{-1}(C_2 - C_1)C_2^{-1}\| \\
    & \leq \|W^{-1}\| \cdot \|C_1^{-1}\|\cdot \|C_1 - C_2\|\cdot \|C_2^{-1}\| \\
    & \leq \|W^{-1}\|\cdot\|W\|\frac{\|HH^\top - BSS^\top B^\top\|}{\sigma^2_\mathrm{min}(W^{1/2}H) + \delta} \\
    & \leq \kappa(W) \frac{2\|BS\|\cdot\|E\| + \|E\|^2}{\sigma_\mathrm{min}^2(W^{1/2}H) + \delta},
    \end{align*}
    where we used that $\|C_2^{-1}\| = 1$, $\|C_1^{-1}\| \leq \frac{\delta}{\sigma_\mathrm{min}^2(W^{1/2}H) + \delta}$, and $\|C_1-C_2\|\leq 1/\delta \|W\|\cdot\|HH^\top - BSS^\top B^\top\|$ in the third inequality.
\end{proof}

\bibliographystyle{ieeetr}        
\bibliography{Literature}
\end{document}